\theoremstyle{theorem}
\newtheorem{corollary}{Corollary}
\newtheorem{lemma}[corollary]{Lemma}
\newtheorem{proposition}[corollary]{Proposition}
\newtheorem{theorem}[corollary]{Theorem}
\newcommand\seq[3]{\{#1_{#2},\ldots,#1_{#3}\}}
\begin{document}

\AtEndDocument{%
  \par
  \medskip
  \begin{tabular}{@{}l@{}}%
    \textsc{Gabriel Coutinho}\\
    \textsc{Dept. of Computer Science} \\ 
    \textsc{Universidade Federal de Minas Gerais, Brazil} \\
    \textit{E-mail address}: \texttt{gabriel@dcc.ufmg.br} \\ \ \\
    \textsc{Pedro Ferreira Baptista} \\
    \textsc{Dept. of Computer Science} \\ 
    \textsc{Universidade Federal de Minas Gerais, Brazil} \\
    \textit{E-mail address}: \texttt{pedro.baptista@dcc.ufmg.br}\\ \ \\
    \textsc{Chris Godsil} \\
    \textsc{Dept. of Combinatorics and Optimization} \\ 
    \textsc{University of Waterloo, Canada} \\
    \textit{E-mail address}: \texttt{cgodsil@uwaterloo.ca}\\ \ \\
    \textsc{Thomás Jung Spier} \\
    \textsc{Dept. of Computer Science} \\ 
    \textsc{Universidade Federal de Minas Gerais, Brazil} \\
    \textit{E-mail address}: \texttt{thomasjspier00@gmail.com}\\ \ \\
    \textsc{Reinhard Werner} \\
    \textsc{Institut für Theoretische Physik} \\ 
    \textsc{Leibniz Universität Hannover, Germany} \\
    \textit{E-mail address}: \texttt{reinhard.werner@itp.uni-hannover.de}
  \end{tabular}}

\title{Irrational quantum walks}
\author{Gabriel Coutinho\footnote{gabriel@dcc.ufmg.br --- remaining affiliations in the end of the manuscript.} \and Pedro Ferreira Baptista \and Chris Godsil \and Thomás Jung Spier \and Reinhard Werner}
\date{\today}
\maketitle
\vspace{-0.8cm}

\begin{abstract} 
	The adjacency matrix of a graph $G$ is the Hamiltonian for a continuous-time quantum walk
	on the vertices of $G$. Although the entries of the adjacency matrix are integers,
	its eigenvalues are generally irrational and, because of this, the behaviour of the walk
	is typically not periodic. In consequence we can usually only compute numerical
	approximations to parameters of the walk.
	In this paper, we develop theory to exactly study any quantum walk generated by an integral Hamiltonian. As a result, we provide exact methods to compute the average of the mixing matrices, and to decide whether pretty good (or almost) perfect state transfer occurs in a given graph. We also use our methods to study geometric properties of beautiful curves arising from entries of the quantum walk matrix, and discuss possible applications of these results.  
\end{abstract}

\begin{center}
\textbf{Keywords}

continuous-time quantum walk; pretty good state transfer; \\ average mixing matrix
\end{center}

\section{Introduction}

We will study continuous-time quantum walks on finite, undirected, unweighted graphs $G$. The Hamiltonian $H$ for such walks is defined by
\[
	H = \frac{1}{2} \sum_{ab \in E(G)} (X_aX_b + Y_aY_b).
\]
This Hamiltonian acts on $\Cds^{2^n}$, with $X_a$ and $Y_a$ denoting the operators that apply the Pauli matrices $X$ and $Y$ on the qubit located at vertex $a$ (and leave the other qubits invariant).

The $k$-excitation subspace is spanned by vectors $\ket{s}$ with $s \in \{0,1\}^n$ with $k$ entries equal to $1$. Here $\ket{s}$ is to be understood as the Kronecker product of $\ket 0$s and $\ket 1$s, where $\{\ket 0,\ket 1\}$ is the standard basis of $\Cds^2$. The Hamiltonian $H$ leaves each of the $k$-excitation subspaces invariant, and if $A=A(G)$, its action on the $1$-excitation subspace is determined by $A$. 
It follows (from Schr\"odinger's equation) that if the initial state of the walk in the $1$-excitation subspace is given by the unit vector $\ket u$, the state time $t$ is (essentially) $\exp(\ii t A)\ket u$.

Since the adjacency matrix is finite, real and symmetric it has real eigenvalues $\seq\theta0d$.
If $E_r$ represents orthogonal projection onto the $\theta_r$-eigenspace of $A$, then $A$ has the orthogonal projection
\[
	A = \sum_{r = 0}^d \theta_r E_r.
\]
Thus, the quantum walk is completely described by its transition matrix
\[
	\exp(\ii t A)  = \sum_{r = 0}^d \e^{\ii t \theta_r} E_r.
\]
For most graphs the eigenvalues $\theta_r$ are irrational, and so are the entries of the projectors $E_r$, thus forcing, in principle, that one deals with numerical approximations when concretely observing quantum walks. Yet, because the characteristic polynomial of $A$ is always monic and integral, it is possible to study the properties of the quantum walk over its splitting field. This is the main topic of this paper.

In Section \ref{sec:polynomials}, we recall basic facts about algebraic extensions of the rationals, followed by the presentation of a result  due to Landau \cite{Landau1985} which guarantees a relatively efficient complete factorization of a polynomial with integer coefficients over its splitting field. 
We use this to provide an algorithm that computes the entries of the average mixing matrix of the walk. It follows from a Galois theoretic argument that the entries of this matrix are rational, but no algorithm to compute its entries was known. (We defer the definition of the average mixing matrix, but note that it provides a guide to the long-term behaviour of continuous quantum walks.)
 
Section \ref{sec:kronecker} is devoted to the study of pretty good state transfer. This occurs when the probability of state transfer between two qubits in the continuous-time quantum walk network can be found to be arbitrarily close to 1 at different (and increasingly large) times; it is the epsilon variant of perfect state transfer. 
Perfect state transfer has been extensively studied, and there is a (polynomial time) algorithmic characterization of when it occurs (see \cite{CoutinhoGodsilPSTpolytime}). However, the best known tool to decide the existence of pretty good state transfer in a network relies on Kronecker's Theorem (see for instance the recent works \cite{VinetZhedanovAlmost,Banchi2017,GodsilKirklandSeveriniSmithPGST,CoutinhoGuovanBommel,vanbommel2020pretty,LippnerEisenbergKempton}), for which an algorithmic version has not yet been found, to the best of our knowledge. Landau's algorithm from Section \ref{sec:polynomials} allows us to use the Smith Normal Form to check Kronecker's condition, thus providing us an exact method that checks whether the condition in Kronecker's Theorem is satisfied, and thus whether pretty good state transfer occurs in a graph.
 
It $U(t)$ is the transition matrix $\exp(\ii t A)$ and $a,b\in V(G)$, then $U(t)_{a,b}$ describes a curve in the complex plane.
In Sections \ref{sec:shape}, \ref{sec:uninteresting} and \ref{sec:oddprime} we discuss the application of the results from the previous sections to describe geometric properties of these curves. We focus our examples and results on the cases where $G$ is an odd prime cycle, but the techniques generalize to any graph, and we see another example in Section \ref{sec:another}. In particular, we describe a method that determines the rotational symmetries and the caustics of the regions of $\Cds$ where the curves are dense. In Section~\ref{sec:max} we show how to apply this theory to find the supremum of the probabilities of transfer in any quantum walk.

\section{Factoring polynomials} \label{sec:polynomials}

This section assumes basic knowledge about field extensions of rationals. We recommend the main reference \cite{cox2012galois} for an introductory treatment, for example as a reference to Theorem \ref{thm:prim}. Following, we review algorithmic aspects of the theory, based mainly on  \cite{Trager1976,Landau1985}.

If $\alpha$ is a real number, root of a polynomial with rational coefficients, then $\alpha$ is an algebraic number. If the polynomial has integer coefficients and is monic, then $\alpha$ is an algebraic integer. Its minimal polynomial is the polynomial of smallest degree having $\alpha$ as a root. The field extension $\Qds[\alpha]$ contains $\Qds$ and all rational linear combinations of powers of $\alpha$ all the way up to the degree of its minimal polynomial minus one. Given a monic polynomial $p(x)$ with coefficients in $\Zds$, its splitting field is the smallest field extension of $\Qds$ over which $p(x)$ factors completely.

\begin{theorem}[Primitive Element Theorem] \label{thm:prim}
	If $p(x)$ is a monic polynomial with integer coefficients, then there exists $\alpha \in \Cds$ so that the splitting field of $p(x)$ is $\Qds[\alpha]$.
\end{theorem}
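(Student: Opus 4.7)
The plan is to reduce to the two-element case and then argue by induction on the number of generators. Let $\beta_1,\ldots,\beta_n$ be the distinct roots of $p(x)$ in $\Cds$, so the splitting field is $K = \Qds[\beta_1,\ldots,\beta_n]$. Since $K = \Qds[\beta_1,\ldots,\beta_{n-1}][\beta_n]$ and the left factor is itself the splitting field of a smaller monic integral polynomial (one can take a product of minimal polynomials), an induction reduces matters to proving: for any two algebraic numbers $\beta,\gamma$ over $\Qds$, there exists $\delta$ such that $\Qds[\beta,\gamma]=\Qds[\delta]$.

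For this core step I would set $\delta=\beta+c\gamma$ for a rational $c$ to be chosen. Let $f(x)$ and $g(x)$ be the minimal polynomials of $\beta$ and $\gamma$, with complete lists of conjugates $\beta = \beta^{(1)},\ldots,\beta^{(s)}$ and $\gamma=\gamma^{(1)},\ldots,\gamma^{(t)}$ in $\Cds$. Because $\Qds$ has characteristic zero, $f$ and $g$ are separable, so these lists contain no repetitions. For each pair $(i,j)$ with $j>1$, the equation
\[
\beta + c\gamma \;=\; \beta^{(i)} + c\gamma^{(j)}
\]
forces a unique value of $c$ (since $\gamma\neq\gamma^{(j)}$). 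Only finitely many pairs arise, so some $c\in\Qds$ avoids all of them, and I would fix such a $c$.

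The crucial step is then to recover $\gamma$ inside $\Qds[\delta]$. Consider the polynomial $h(x) = f(\delta - cx)$, which lies in $\Qds[\delta][x]$. By construction $h(\gamma) = f(\beta) = 0$, while for every $j>1$ the choice of $c$ guarantees $h(\gamma^{(j)}) = f(\delta - c\gamma^{(j)}) \neq 0$, since $\delta - c\gamma^{(j)}$ is not a conjugate of $\beta$. Hence $\gamma$ is the unique common root of $h(x)$ and $g(x)$ in $\Cds$, so their greatest common divisor in $\Qds[\delta][x]$ equals $x-\gamma$. Because the Euclidean algorithm produces this gcd inside $\Qds[\delta][x]$, we conclude $\gamma\in\Qds[\delta]$, and then $\beta = \delta - c\gamma \in \Qds[\delta]$. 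The reverse inclusion $\Qds[\delta]\subseteq\Qds[\beta,\gamma]$ is immediate, closing the induction.

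I expect the main obstacle to be the choice of $c$: one must be careful to enumerate the correct list of forbidden coincidences in $\Cds$ so that $\delta$ genuinely distinguishes $\gamma$ from its Galois conjugates, and to confirm that the resulting gcd argument actually takes place over $\Qds[\delta]$ rather than over a larger field where separation of roots is easier. Once $c$ is fixed well, the rest is formal manipulation with minimal polynomials and the Euclidean algorithm.
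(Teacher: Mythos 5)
The paper does not actually prove this statement: it is quoted as a classical result with a pointer to the reference \cite{cox2012galois}, so there is no in-paper argument to compare against. Your proposal is the standard textbook proof of the Primitive Element Theorem (reduce to two generators, set $\delta=\beta+c\gamma$ with $c$ avoiding finitely many bad values, then recover $\gamma$ as the single root of $\gcd\bigl(f(\delta-cx),\,g(x)\bigr)$ computed over $\Qds[\delta]$, using that the monic gcd is unchanged under field extension), and the core two-generator step is correct as written; separability is automatic in characteristic zero, so the conjugate lists have no repetitions and the gcd is exactly $x-\gamma$. The one misstatement is in the induction setup: $\Qds[\beta_1,\ldots,\beta_{n-1}]$ is in general \emph{not} the splitting field of the product of the minimal polynomials of $\beta_1,\ldots,\beta_{n-1}$ (that splitting field contains all conjugates of each $\beta_i$ and may be strictly larger; e.g.\ $\Qds[2^{1/3}]$ is not normal). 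This is harmless because the correct induction hypothesis is the stronger statement that \emph{any} subfield of $\Cds$ generated over $\Qds$ by finitely many algebraic numbers is a simple extension; with that rephrasing your two-generator lemma closes the induction and the theorem for splitting fields follows as a special case.
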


For us, $p(x)$ will be the minimal polynomial of $A(G)$, for some graph $G$. Its roots are the eigenvalues $\seq\theta0d$ of $A$, which play a major role in the behaviour of $\exp(\ii t A)$. Our goal in this section is to describe how to find polynomials $p_0,...,p_d$, with rational coefficients, so that if $\alpha^*$ is a primitive element for the splitting field of $p(x)$, we have $p_r(\alpha^*) = \theta_r$, for each $r$. This is equivalent to the task of completely factoring $p(x)$ over $\Qds[\alpha^*]$.

In \cite{Landau1985}, an algorithm of relative efficiency to completely factor a polynomial over its splitting field is presented. It consists of a clever application of the famous $L^3$ algorithm \cite{LLL} in conjunction with techniques to compute and factor the norm of polynomials over extension fields of the rationals. As a consequence of their work, we state a theorem for later reference that summarizes what we need in this paper. It is essentially \cite[Theorem 2.1]{Landau1985}.

\begin{theorem} \label{thm:computingpols}
	Given $A(G)$ with eigenvalues $\theta_0,...,\theta_d$, it is possible to recover polynomials $p_0,...,p_d$, with rational coefficients so that, for some primitive element $\alpha$ of the splitting field of the characteristic polynomial $\phi(x)$ of $A$, we have $p_r(\alpha) = \theta_r$. Moreover, the complexity of this procedure is polynomial on the degree of the splitting field of $\phi(x)$ over $\Qds$ and the logarithm of its largest coefficient. \qed
\end{theorem}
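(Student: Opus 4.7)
The plan is to reduce the statement almost directly to the main factorization result of Landau \cite{Landau1985}. All that is needed is to translate between the form of the input here (an adjacency matrix) and the hypotheses of Landau's algorithm (a monic polynomial over $\Zds$), and then to read off the $p_r$ from the output.

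First, compute the characteristic polynomial $\phi(x) = \det(xI - A)$; this is monic of degree $n = |V(G)|$ with coefficients in $\Zds$, and its roots are exactly the eigenvalues $\seq\theta0d$ of $A$ (with multiplicities). By Theorem~\ref{thm:prim}, the splitting field $K$ of $\phi(x)$ over $\Qds$ can be written as $\Qds[\alpha]$ for some primitive element $\alpha$. Next, invoke Landau's algorithm on $\phi(x)$: its output is an explicit primitive element $\alpha$, given by its minimal polynomial over $\Qds$ together with a numerical approximation of high enough precision to distinguish it from its Galois conjugates, plus a complete factorization
\[
\phi(x) = \prod_{r=0}^{d} (x - \beta_r)^{m_r}
\]
in $\Qds[\alpha][x]$. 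Because every element of $\Qds[\alpha]$ admits a unique representation as a rational polynomial in $\alpha$ of degree less than $[K:\Qds]$, each $\beta_r$ is delivered by the algorithm in the form $p_r(\alpha)$ for some explicit $p_r \in \Qds[x]$. To label the $p_r$ so that $p_r(\alpha)$ matches the intended eigenvalue $\theta_r$, evaluate each $p_r$ at a floating-point approximation of $\alpha$ and compare with floating-point approximations of the eigenvalues computed directly from $A$.

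The complexity claim is then immediate: Landau's theorem gives polynomial running time in $[K:\Qds]$ and in $\log\|\phi\|_\infty$. Computing $\phi(x)$ from $A$ and carrying out the numerical labelling are both polynomial in $n$ and in the bit-sizes involved, so they are absorbed into Landau's bound. It is worth recording that both $[K:\Qds]$ and $\log\|\phi\|_\infty$ can be exponential in $n$ in the worst case (the Galois group of $\phi$ can be as large as $S_n$), so one cannot reasonably expect a bound that is polynomial in $n$ alone.

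The hard part, handled entirely inside \cite{Landau1985} rather than here, is keeping the bit-complexity of $\alpha$ and of the polynomials $p_r$ polynomial in $[K:\Qds]$ and $\log\|\phi\|_\infty$: naive manipulation of elements of a splitting field (for instance, iteratively adjoining roots and using generic primitive-element constructions) produces representations whose size blows up exponentially with the Galois group. Landau avoids this by combining resultant-based norm computations over intermediate extensions with the $L^3$ lattice-basis reduction algorithm \cite{LLL} to reduce the coefficients at each stage. For our application, however, this algorithm is used strictly as a black box.
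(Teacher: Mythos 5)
Your proposal is correct and takes the same route as the paper, which in fact offers no proof of its own beyond citing Landau's Theorem 2.1 (note the \verb|\qed| in the statement); your write-up simply makes that black-box reduction explicit, including the routine numerical relabelling of the roots. The only cosmetic difference is that the paper applies Landau to the minimal (i.e.\ squarefree) polynomial of $A$ rather than to $\phi(x)$ itself, a preprocessing step worth one line since the $\theta_r$ are the \emph{distinct} eigenvalues.
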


As a first application of this result to quantum walks, we show below that the entries of the average mixing matrix can be computed with exact precision. The average mixing matrix has been introduced in \cite{TamonAdamczakUniformMixingCycles}, and extensively studied in \cite{GodsilAverageMixing,CoutinhoGodsilGuoZhanAMM}. It is the matrix that gives the average of the probabilities of the quantum walk, that is
\[
	\widehat{M} = \lim_{T \to \infty} \frac{1}{T} \int_0^T \exp(\ii t A) \circ \exp(-\ii t A)\ \textrm{d}t,
\]
where $\circ$ stands for what is known as the entrywise, Hadamard or Schur product of matrices. Recalling that $A = \sum_{r = 0}^d \theta_r E_r$ is the spectral decomposition of $A$, it is easy to derive that
\[
	\widehat{M} = \sum_{r = 0}^d E_r \circ E_r,
\]
and in \cite[Lemma 2.4]{GodsilAverageMixing}, it is proved that the average mixing matrix is rational, with Lemma 3.1 therein giving an upper bound to the denominator of the entries. Here we show how to explicitly compute these rational numbers in exact arithmetics.

It is possible to express the entries of the idempotents using only the eigenvalues of the graph and of its vertex-deleted subgraphs (see \cite[Section 2.3]{coutinho2021quantum} and \cite[Chapter 4]{GodsilAlgebraicCombinatorics}). In what follows, we are denoting arbitrary vertices in the graph $G$ by $a$ and $b$, and by $\phi_H(x)$ the characteristic polynomial of a graph $H$. Then,

\begin{align}\label{eq:1}
	\bra{a} E_r \ket{a} = \frac{(x - \theta_r) \phi_{G \backslash a}(x) }{\phi_G(x)}\bigg|_{x = \theta_r} ,
\end{align}
and
\begin{align}\label{eq:2}
	\bra{a} E_r \ket{b} = \frac{(x - \theta_r) \sqrt{\phi_{G \backslash a}(x)\phi_{G \backslash b}(x) - \phi_{G}(x)\phi_{G \backslash ab}(x)}}{\phi_G(x)}\bigg|_{x = \theta_r}.
\end{align}

It is also known that the square root is indeed a polynomial, which we denote by $\phi_{ab}(x)$ when $a\neq b$. For ease of notation in the argument, let $\phi_{aa}(x) = \phi_{G \backslash a}(x)$.

\begin{theorem}\label{thm:avgmatrix}
	The integers in the numerators and denominators of the rational entries of $\widehat{M}$ are computable in exact arithmetics.
\end{theorem}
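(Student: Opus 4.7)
The plan is to reduce every step to exact rational arithmetic inside the splitting field $K$ of the minimal polynomial of $A(G)$, using Theorem~\ref{thm:computingpols} to represent elements of $K$ concretely, and then to exploit the fact that $\widehat{M}_{a,b}$ is a priori known to be rational (\cite[Lemma 2.4]{GodsilAverageMixing}) to read off the answer as a single coordinate.

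First I would apply Theorem~\ref{thm:computingpols} to the minimal polynomial of $A$, obtaining a primitive element $\alpha$ for $K$, its minimal polynomial $m_\alpha(x) \in \Zds[x]$, and polynomials $p_0,\ldots,p_d \in \Qds[x]$ with $p_r(\alpha) = \theta_r$. This identifies $K$ with $\Qds[x]/(m_\alpha(x))$, a representation in which addition, multiplication, and inversion (the latter via the extended Euclidean algorithm modulo $m_\alpha$) are all effective in $\Qds$, hence in exact arithmetic.

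Next, for each pair of vertices $a,b$ and each $r$, I would compute $(E_r)_{a,b}$ in $K$ via the Lagrange form
\[
	E_r = \prod_{s \neq r} \frac{A - \theta_s I}{\theta_r - \theta_s},
\]
substituting $\theta_s = p_s(\alpha)$ and reducing matrix products modulo $m_\alpha$. I prefer this over equations (\ref{eq:1})--(\ref{eq:2}) because the product ranges over \emph{distinct} eigenvalues and so avoids both the apparent $0/0$ that arises at eigenvalues of multiplicity greater than $1$ and the sign ambiguity of $\phi_{ab}$. Squaring and then summing over $r$ yields
\[
	\widehat{M}_{a,b} = \sum_{r=0}^d (E_r)_{a,b}^2
\]
as an explicit element of $K$, expressed in the $\Qds$-basis $\{1,\alpha,\ldots,\alpha^{\deg m_\alpha -1}\}$.

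By \cite[Lemma 2.4]{GodsilAverageMixing}, $\widehat{M}_{a,b} \in \Qds$; in our basis this means that every coordinate other than the one attached to $1$ must vanish (a fact one can either verify or silently discard), and the remaining constant coordinate is the rational value of $\widehat{M}_{a,b}$. Reducing that rational to lowest terms gives the coprime integer numerator and denominator required by the statement. The main obstacle in executing this plan is purely computational bookkeeping: one must implement matrix arithmetic and inversion in the extension $K \cong \Qds[x]/(m_\alpha)$, whose degree can grow quickly with the graph; however, nothing conceptually deeper than Theorem~\ref{thm:computingpols} is required.
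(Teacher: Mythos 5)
Your proposal is correct and follows the same overall strategy as the paper's proof: use Theorem~\ref{thm:computingpols} to realize the splitting field as $\Qds[\alpha]$ with effective arithmetic, compute each $\bra{a} E_r \ket{b}$ exactly as an element of that field, square and sum over $r$, and invoke the known rationality of $\widehat{M}$ to read off the answer as a rational number. The one substantive difference is how the idempotent entries are obtained: the paper evaluates the formulas \eqref{eq:1}--\eqref{eq:2}, first cancelling the common factor of numerator and denominator over $\Qds[\alpha]$ to resolve the apparent $0/0$, whereas you use the Lagrange interpolation product $E_r=\prod_{s\neq r}(A-\theta_s I)/(\theta_r-\theta_s)$ over the distinct eigenvalues. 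Both are valid exact computations; yours avoids the gcd step and the polynomial square root $\phi_{ab}$ at the cost of $d$ matrix multiplications over the extension field, while the sign ambiguity of $\phi_{ab}$ that motivates your choice is in fact harmless here, since only the squares $\bra{a} E_r \ket{b}^2$ enter $\widehat{M}$.
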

\begin{proof}
	Let $\alpha$ be a primitive element to the splitting field of $\phi(x)$, and consider polynomials $p_r(x)$ with $p_r(\alpha) = \theta_r$. Let $a$ and $b$ be vertices, possibly equal. First, divide numerator and denominator of the ratio $(x - p_r(\alpha))\phi_{ab}(x)/ \phi(x)$ by their gcd, computed over $\Qds[\alpha]$. Then, make $x = p_r(\alpha)$. Thus, $\bra{a} E_r \ket{b}$ is written as a polynomial with rational coefficients on the variable $\alpha$, and because 
	\[
		\bra{a}\widehat{M}\ket{b} = \sum_{r = 0}^d \bra{a} E_r \ket{b}^2,
	\]
	it follows that entries of $\widehat{M}$ will be expressed as polynomials in $\alpha$ as well. However this matrix is rational, as we mentioned, and thus, upon computing these polynomials over $\Qds[\alpha]$, we will have recovered the rational numbers.
\end{proof}

\section{Deciding pretty good state transfer} \label{sec:kronecker}

Pretty good state transfer between vertices $a$ and $b$ in a graph occurs whenever, for all $\varepsilon > 0$, it is possible to find $t$ so that
\[
	\big|\bra{a} \exp(\ii t A) \ket{b} \big| > 1 -\varepsilon.
\]
If there is a $t$ for which $|\bra{a} \exp(\ii t A) \ket{b}| = 1$, then we say perfect state transfer occurs. Given a graph, an algorithm that decides whether or not it admits perfect state transfer was shown in \cite{CoutinhoGodsilPSTpolytime}. Prior to our work, no algorithm that decides whether or not pretty good state transfer occurs was known. The difference between the two phenomenon is not insignificant: there are infinitely many examples of graphs that admit pretty good state transfer but not perfect, and the capability to identify more examples, or rule out candidates, is key to the design of new communication protocols within a quantum information framework.

Recall that $A = \sum \theta_r E_r$ is the spectral decomposition of $A$. From \cite[Lemma 3]{Banchi2017}, we know that pretty good state transfer implies that, whenever $E_r \ket a \neq 0$, it must be that
\[E_r \ket a = \sigma_r E_r \ket b,\]
with $\sigma_r = \pm 1$. This conditions is named strong cospectrality between vertices $a$ and $b$. 

A characterization of pretty good state transfer was provided in \cite{Banchi2017} using Kronecker's theorem on Diophantine approximations, building upon previous works. However, this characterization does not provide an algorithm that decides the existence of pretty good state transfer. Our goal below is to provide this algorithm. For the next lemma, we follow \cite[Lemmas 2.5 and 2.8]{LippnerEisenbergKempton}.

\begin{lemma}\label{lem:factorizationphi}
	Let $A$ be the adjacency matrix of a graph $G$, and assume vertices $a$ and $b$ are strongly cospectral. Assume $\phi(x)$ is the characteristic polynomial of $A$. Then $\phi(x)$ factors over $\Zds[x]$ as
	\[\phi(x) = \phi_+(x)\ \phi_-(x)\ \phi_0(x),\]
	and, moreover,
	\begin{enumerate}[(a)]
		\item The roots of $\phi_+$ and $\phi_-$ are simple.
		\item For each $\lambda$ root of $\phi_+$, there is eigenvector $\ket v$ of $A$ with $\braket{a}{v} = \braket{b}{v} \neq 0$.
		\item For each $\lambda$ root of $\phi_-$, there is eigenvector $\ket v$ of $A$ with $\braket{a}{v} = - \braket{b}{v} \neq 0$.
		\item For each root $\lambda$ of $\phi_0$ of multiplicity $k$, there are $k$ linearly independent eigenvectors of $A$ which are $0$ at $a$ and $b$.
		\item $\phi_+$ and $\phi_-$ share no common root.
	\end{enumerate}
	Moreover, $\phi_+$ is the minimal polymomial of $A$ in the module generated by $\ket{a} + \ket{b}$, and $\phi_-$ is the minimal polynomial of $A$ in the module generated by $\ket{a} - \ket{b}$.
\end{lemma}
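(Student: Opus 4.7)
The plan is to partition the spectral idempotents of $A$ according to the sign $\sigma_r$ forced by strong cospectrality, define $\phi_+$, $\phi_-$, $\phi_0$ combinatorially from this partition, and then invoke Gauss's lemma to push the factorization from $\Qds[x]$ down to $\Zds[x]$.

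First I would split the indices $\{0,\ldots,d\}$ into $S_+ = \{r : E_r\ket a = E_r\ket b \neq 0\}$, $S_- = \{r : E_r\ket a = -E_r\ket b \neq 0\}$, and $S_0 = \{r : E_r\ket a = 0\}$; by strong cospectrality this is a partition of the index set. Setting $\phi_\pm(x) := \prod_{r\in S_\pm}(x-\theta_r)$ and $\phi_0(x) := \phi(x)/(\phi_+(x)\phi_-(x))$ makes items (a) and (e) automatic, since the $\theta_r$ are pairwise distinct eigenvalues and $S_+\cap S_-$ is empty.

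Next I would establish integrality of the three factors, which is the only nontrivial point. Using $\ket a \pm \ket b = \sum_r E_r(\ket a \pm \ket b)$ together with strong cospectrality, I would rewrite
\[
\ket a + \ket b = 2\sum_{r\in S_+}E_r\ket a,\qquad \ket a - \ket b = 2\sum_{r\in S_-}E_r\ket a,
\]
each a sum of nonzero eigenvectors of $A$ with pairwise distinct eigenvalues. Hence a polynomial $p\in\Qds[x]$ annihilates $\ket a + \ket b$ iff $p(\theta_r)=0$ for every $r\in S_+$, so the monic minimal such polynomial is exactly $\phi_+(x)$; likewise $\phi_-(x)$ is the monic minimal annihilator of $\ket a - \ket b$. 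Both $\phi_\pm$ are then monic rational divisors of the minimal polynomial of $A$, which lies in $\Zds[x]$, so Gauss's lemma yields $\phi_\pm\in\Zds[x]$. Since the (disjoint) roots of $\phi_+\phi_-$ each occur with multiplicity one in $\phi$, the quotient $\phi_0$ is monic in $\Qds[x]$, and a second application of Gauss to $\phi\in\Zds[x]$ gives $\phi_0\in\Zds[x]$. This simultaneously proves the factorization and the \emph{moreover} clause.

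Finally, items (b)--(d) follow by exhibiting explicit eigenvectors. For $r\in S_+\cup S_-$ I would take $\ket v = E_r\ket a$; the symmetry of $E_r$ and the relation $E_r\ket b = \sigma_r E_r\ket a$ give $\braket{a}{v} = \|E_r\ket a\|^2\neq 0$ and $\braket{b}{v}=\sigma_r\|E_r\ket a\|^2$, which are precisely (b) and (c). For (d), if $\lambda$ is a root of $\phi_0$ of multiplicity $k$, its eigenspace has dimension $k$, and every $\ket v$ in it lies in the range of the corresponding $E_r$; hence $\braket{a}{v} = \bra a E_r \ket v = 0$ because $E_r\ket a = 0$, and similarly $\braket{b}{v}=0$, so the whole $k$-dimensional eigenspace is orthogonal to $\ket a$ and $\ket b$. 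The only real obstacle I foresee is the integrality step, but Gauss's lemma disposes of it cleanly once the minimal-annihilator identification of $\phi_\pm$ is in place.
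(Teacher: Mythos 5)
The paper itself gives no proof of this lemma --- it defers to Lemmas 2.5 and 2.8 of the cited Lippner--Eisenberg--Kempton paper --- and your overall strategy (partition the spectral idempotents by the sign $\sigma_r$, identify $\phi_\pm$ as the minimal polynomials of $A$ relative to $\ket a \pm \ket b$, and use Gauss's lemma to get integrality) is exactly the standard route taken there. The construction of $\phi_\pm$, the identification of the annihilators, the integrality argument, and items (a), (b), (c), (e) are all sound.

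There is, however, a genuine gap in your treatment of item (d), stemming from a hidden assumption that every eigenvalue in the support of $a$ and $b$ is a simple root of $\phi$. Strong cospectrality does not guarantee this: it only forces $E_r\ket a$ and $E_r\ket b$ to span the same line, which is compatible with $\dim(\mathrm{im}\, E_r) > 1$. A concrete counterexample to your argument is the pair of antipodal vertices of $C_4$, where the eigenvalue $0$ has multiplicity $2$ and lies in $S_-$; here your parenthetical claim that the roots of $\phi_+\phi_-$ ``each occur with multiplicity one in $\phi$'' is false, and a root of $\phi_0$ (namely $0$) is \emph{also} a root of $\phi_-$, so your justification of (d) --- ``its eigenspace has dimension $k$ \ldots\ $E_r\ket a = 0$'' --- breaks down on both counts. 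The statement of (d) is designed precisely for this situation. The fix is short: if $\lambda=\theta_r$ has eigenspace of dimension $m$, then the vectors in that eigenspace orthogonal to $E_r\ket a$ form a subspace of dimension $m-1$ when $E_r\ket a\neq 0$ (and $m$ when $E_r\ket a=0$), every such vector $\ket v$ satisfies $\braket{a}{v}=\braket{E_r a}{v}=0$ and likewise at $b$ since $E_r\ket b=\pm E_r\ket a$, and this count matches the multiplicity of $\lambda$ in $\phi_0=\phi/(\phi_+\phi_-)$. Note that the divisibility $\phi_+\phi_-\mid\phi$, and hence the integrality of $\phi_0$, survives without the false multiplicity claim, since each root of $\phi_+\phi_-$ appears there with multiplicity one and is a root of $\phi$.
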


With this lemma, Kronecker's theorem (see for instance \cite[Chapter 3]{AlmostPeriodicFunctionsBook}) gives us the right tool to characterize pretty good state transfer (see \cite[Theorem 2]{Banchi2017} or \cite[Lemma 2.10]{LippnerEisenbergKempton}).

\begin{theorem}[Kronecker's theorem] \label{thm:kro}
Let $\theta_0,...,\theta_d$ and $\zeta_0,...,\zeta_d$ be arbitrary real numbers. All systems of inequalities
\[| \theta_r y - \zeta_r | < \varepsilon \pmod {2\pi} , \quad (r = 0,...,d),\]
obtained for all $\varepsilon > 0$ admit a solution $y \geq 0$ (depending on $\varepsilon$) if and only if whenever integers $\ell_0,...,\ell_d$ satisfy
\[\ell_0 \theta_0 + ... + \ell_d \theta_d = 0,\]
they also satisfy
\[\ell_0 \zeta_0 + ... + \ell_d \zeta_d \equiv 0 \pmod {2\pi}.\] 
\end{theorem}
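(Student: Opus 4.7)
My plan is to treat the two directions separately; the ``only if'' direction is short, while the ``if'' direction needs a duality argument on the torus.

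For necessity, suppose integers $\ell_0,\ldots,\ell_d$ satisfy $\sum_r \ell_r \theta_r = 0$. Given any $\varepsilon>0$, let $y$ solve the Kronecker inequalities with integer witnesses $k_r$ so that $|\theta_r y - \zeta_r - 2\pi k_r| < \varepsilon$. Multiplying by $\ell_r$ and summing,
\[\Bigl| y \sum_r \ell_r \theta_r \;-\; \sum_r \ell_r \zeta_r \;-\; 2\pi \sum_r \ell_r k_r \Bigr| \le \varepsilon \sum_r |\ell_r|.\]
The $y$-term vanishes by hypothesis, so $\sum_r \ell_r \zeta_r$ lies within $\varepsilon\sum_r|\ell_r|$ of a multiple of $2\pi$. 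Letting $\varepsilon \to 0$ with $\ell$ fixed forces $\sum_r \ell_r \zeta_r \equiv 0 \pmod{2\pi}$.

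For sufficiency, I would pass to the compact $(d{+}1)$-torus $T = (\mathbb{R}/2\pi\mathbb{Z})^{d+1}$ and consider the one-parameter orbit $\gamma(y) = (\theta_0 y,\ldots,\theta_d y) \bmod 2\pi$, $y \ge 0$. The Kronecker system admits a solution for every $\varepsilon>0$ if and only if the image of $\zeta=(\zeta_0,\ldots,\zeta_d)$ in $T$ belongs to the closure $H = \overline{\gamma(\mathbb{R}_{\ge 0})}$. Since $\gamma$ is a semigroup homomorphism, $H$ is a closed subsemigroup of $T$, and a standard recurrence argument (any closed subsemigroup of a compact topological group is a subgroup) upgrades this to $H$ being a closed subgroup of $T$. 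By Pontryagin duality, $H$ is then the intersection of the kernels of the characters of $T$ that annihilate it. Every character of $T$ has the form $\chi_\ell(x) = \exp(\ii \sum_r \ell_r x_r)$ for some $\ell \in \mathbb{Z}^{d+1}$, and $\chi_\ell$ annihilates $H$ precisely when $\chi_\ell(\gamma(y)) = 1$ for all $y \ge 0$, i.e.\ exactly when $\sum_r \ell_r \theta_r = 0$. The Kronecker hypothesis is precisely that each such $\chi_\ell$ vanishes on $\zeta$ as well, so $\zeta \in H$, as required.

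The main technical obstacle is the passage from subsemigroup to subgroup. Concretely, for each $y_0 \ge 0$ one needs a sequence $y_n \to \infty$ with $\gamma(y_n)$ approaching the identity of $T$; then $\gamma(y_n - y_0) \to -\gamma(y_0)$ places $-\gamma(y_0)$ in $H$. This is a pure Poincaré recurrence argument in a compact group. Everything else is a routine application of character theory on the torus. A fully self-contained alternative avoids duality entirely by induction on $d$, peeling off one coordinate at a time, as in the classical treatment referenced by the authors.
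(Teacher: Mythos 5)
Your proposal is correct, but it cannot be compared against a proof in the paper, because the paper gives none: Theorem~\ref{thm:kro} is stated as a classical result and cited to the literature on almost periodic functions, so the authors take it as known. Your necessity argument is the standard one and is sound (the one point worth making explicit is that $2\pi\mathbb{Z}$ is closed, so a fixed number within $\varepsilon\sum_r|\ell_r|$ of it for every $\varepsilon>0$ must lie in it). Your sufficiency argument via Pontryagin duality is also a complete and standard proof: the closure $H$ of the orbit is a closed subsemigroup of the compact torus, hence a closed subgroup by the recurrence argument you sketch (which does go through: for any $\delta>0$ and any $N$, applying pigeonhole to $\gamma(N),\gamma(2N),\dots$ produces $m\ge N$ with $\gamma(m)$ within $\delta$ of the identity), and the bipolar theorem for closed subgroups of locally compact abelian groups identifies $H$ with the common kernel of the characters $\chi_\ell$ with $\sum_r\ell_r\theta_r=0$, which is exactly the stated condition on $\zeta$. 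It is worth noting that this duality viewpoint is precisely the one the authors invoke informally in Section~\ref{sec:another}, where they describe the closure $\Gamma$ of the orbit in the $(d+1)$-torus as being determined by the integer tuples $(n_k)$ with $\sum_k n_k\theta_k=0$; your proof makes that remark rigorous. The alternative you mention (induction on $d$, as in classical treatments) is the route such references typically take; the duality proof is shorter but requires the structure theory of locally compact abelian groups as a black box.
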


\begin{corollary}[Pretty good state transfer characterization] \label{cor:pgst}
	Given a graph $G$ with adjacency matrix $A$, characteristic polynomial $\phi$, and vertices $a$ and $b$, then there is pretty good state transfer between $a$ and $b$ if and only if both conditions below hold.
	\begin{enumerate}[(1)]
		\item Vertices $a$ and $b$ are strongly cospectral (consider the factorization of $\phi$ as in Lemma \ref{lem:factorizationphi}).
		\item Let $\{\lambda_i\}$ be the roots of $\phi_+$ and $\{\mu_j\}$ be the roots of $\phi_-$. For all integers $\ell_i$ and $m_j$ satisfying
		\[\sum_i \ell_i \lambda_i + \sum_j m_j \mu_j = 0 \quad \text{and} \quad \sum_i \ell_i + \sum_j m_j = 0,\]
		it also holds that
		\[\sum_j m_j \quad \text{is even}.\]
	\end{enumerate}
\end{corollary}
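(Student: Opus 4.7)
The plan is to recast pretty good state transfer as a system of simultaneous Diophantine approximations for $t\theta_r \bmod 2\pi$ and apply Theorem \ref{thm:kro} directly.

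The necessity of (1) is already in \cite{Banchi2017}, but the argument is worth recalling because it also produces the set-up for (2). Pretty good state transfer provides a sequence $t_n$ with $|\bra{b}\exp(\ii t_n A)\ket{a}| \to 1$; passing to a subsequence, $\bra{b}\exp(\ii t_n A)\ket{a} \to \gamma$ for some unit $\gamma$, and unitarity of $\exp(\ii t_n A)$ upgrades this to $\exp(\ii t_n A)\ket{a} \to \gamma\ket{b}$. Projecting onto each eigenspace forces $\e^{\ii t_n \theta_r}E_r\ket{a} \to \gamma E_r\ket{b}$; comparing norms (and using that $E_r$ is real) gives $E_r\ket{a} = \sigma_r E_r\ket{b}$ with $\sigma_r \in \{\pm 1\}$, which is (1), and comparing scalars then gives the key relation $\sigma_r \e^{\ii t_n\theta_r} \to \gamma$ for every $r$ with $E_r\ket{a}\neq 0$. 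By Lemma \ref{lem:factorizationphi}, $\sigma_r = +1$ exactly when $\theta_r$ is a root $\lambda_i$ of $\phi_+$ and $\sigma_r=-1$ when it is a root $\mu_j$ of $\phi_-$.

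Writing $\gamma = \e^{\ii\zeta_0}$, pretty good state transfer is therefore equivalent to the existence of some $\zeta_0\in\Rds$ such that, for every $\varepsilon>0$, some $t\geq 0$ satisfies $|t\lambda_i - \zeta_0| < \varepsilon$ and $|t\mu_j - (\zeta_0+\pi)| < \varepsilon$ modulo $2\pi$, for all $i,j$. Theorem \ref{thm:kro}, applied with $\zeta_r \in \{\zeta_0, \zeta_0+\pi\}$ according to which factor $\theta_r$ belongs to, asserts that such $t$ exist for all $\varepsilon$ if and only if every integer relation $\sum_i \ell_i\lambda_i + \sum_j m_j\mu_j = 0$ satisfies
\[
\bigl(\textstyle\sum_i \ell_i + \textstyle\sum_j m_j\bigr)\zeta_0 + \pi\textstyle\sum_j m_j \equiv 0 \pmod{2\pi}.
\]

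It remains to identify the existence of some such $\zeta_0$ with condition (2). The necessity direction is immediate: any relation with $\sum_i\ell_i + \sum_j m_j = 0$ forces $\sum_j m_j$ to be even. For sufficiency, let $L$ be the lattice of integer relations, let $f(\ell,m) = \sum_i \ell_i + \sum_j m_j$, and let $n \geq 0$ be the nonnegative generator of $f(L)$. If $n=0$, every $\zeta_0$ works. Otherwise, pick any $(\ell^*,m^*)\in L$ with $f(\ell^*,m^*)=n$, solve $n\zeta_0 + \pi\sum_j m_j^* \equiv 0 \pmod{2\pi}$ for $\zeta_0$, and verify the congruence for every other $(\ell,m) \in L$ with $f(\ell,m)=kn$ by noting that $(\ell,m)-k(\ell^*,m^*)\in \ker(f)\cap L$, so (2) makes $\sum_j m_j - k\sum_j m_j^*$ even. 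This last reconciliation is where I expect the main obstacle: at first glance (2) looks strictly weaker than the full Kronecker condition, and one has to observe that the remaining freedom in $\zeta_0$ is, via the lattice structure of $L$ and a parity argument, exactly enough to absorb the relations outside $\ker(f)$.
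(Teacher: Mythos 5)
Your proof is correct and follows essentially the same route as the paper, which states this corollary as a direct consequence of Lemma \ref{lem:factorizationphi} and Theorem \ref{thm:kro} and defers the details to the cited sources (\cite[Theorem 2]{Banchi2017}, \cite[Lemma 2.10]{LippnerEisenbergKempton}): reduce PGST to the simultaneous approximation of $t\lambda_i$ by $\zeta_0$ and $t\mu_j$ by $\zeta_0+\pi$, apply Kronecker, and eliminate the unknown phase $\zeta_0$ via the lattice of integer relations. Your final lattice/parity argument reconciling the existential quantifier over $\zeta_0$ with condition (2) is exactly the step handled in those references, and it is carried out correctly here.
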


It is also known that condition (1) above can be tested in time polynomial on the number of vertices of $G$ (see for instance \cite{CoutinhoGodsilPSTpolytime}, but also as consequence of Lemmas 2.5 and 2.8 in \cite{LippnerEisenbergKempton}). We are ready to present our pretty good state transfer algorithmic characterization.

\begin{theorem}\label{thm:pgstdecision}
	There exists an algorithm that tests whether condition (2) of Corollary \ref{cor:pgst} holds or not. It works in time polynomial on the number of vertices and on the degree of the splitting field of $\phi(x)$.
\end{theorem}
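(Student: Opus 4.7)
The plan is to reduce condition (2) to a question about the integer kernel of a rational matrix, using Landau's algorithm to compute the necessary coefficients exactly. First I would apply Theorem \ref{thm:computingpols} to the minimal polynomial of $A$, obtaining a primitive element $\alpha$ for the splitting field of $\phi(x)$, its minimal polynomial $m(x)$ of degree $n$ over $\Qds$, and rational polynomials $p_r(x)$ of degree at most $n-1$ with $p_r(\alpha)=\theta_r$. Each root $\lambda_i$ of $\phi_+$ and $\mu_j$ of $\phi_-$ is some $\theta_r$, so we obtain expansions
\[
    \lambda_i = \sum_{k=0}^{n-1} a_{i,k}\, \alpha^k,
    \qquad
    \mu_j   = \sum_{k=0}^{n-1} b_{j,k}\, \alpha^k,
\]
with $a_{i,k}, b_{j,k}\in\Qds$. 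Because $\{1,\alpha,\dots,\alpha^{n-1}\}$ is a $\Qds$-basis of $\Qds[\alpha]$, the linear relation $\sum_i \ell_i \lambda_i + \sum_j m_j \mu_j = 0$ is equivalent to the $n$ rational equations $\sum_i \ell_i a_{i,k} + \sum_j m_j b_{j,k}=0$ for $k=0,\dots,n-1$.

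Augmenting these with the side constraint $\sum_i\ell_i + \sum_j m_j = 0$ from Corollary \ref{cor:pgst} yields an $(n+1)\times(n_++n_-)$ system of linear equations in the unknown vector $(\ell,m)$, where $n_\pm=\deg\phi_\pm$. Clearing denominators produces an integer matrix $M$; the set of integer tuples $(\ell,m)$ allowed by Kronecker's condition is exactly the integer kernel $L=\ker M \cap \Zds^{n_++n_-}$. I would compute a $\Zds$-basis $v_1,\dots,v_s$ of $L$ in polynomial time using the Hermite (or Smith) Normal Form of $M$: if $M=UDV$ is the Smith decomposition with $U,V$ unimodular and $D$ diagonal, then the last $n_++n_--\mathrm{rank}(M)$ columns of $V^{-1}$ form such a basis.

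Condition (2) of Corollary \ref{cor:pgst} now asks whether the $\Zds$-linear functional $f\colon (\ell,m)\mapsto \sum_j m_j$ satisfies $f(L)\subseteq 2\Zds$. Since $f(L)$ is the subgroup of $\Zds$ generated by $f(v_1),\dots,f(v_s)$, this holds iff $f(v_l)\equiv 0\pmod 2$ for every basis vector, a finite and trivial check. The algorithm outputs \emph{yes} exactly when all parities vanish, and otherwise \emph{no}; combining this with the polynomial-time test for strong cospectrality cited before the theorem statement yields the desired decision procedure.

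The running time is dominated by two ingredients. Landau's procedure (Theorem \ref{thm:computingpols}) produces $\alpha$, $m(x)$ and the $p_r(x)$ in time polynomial in the number of vertices of $G$ and the degree $n$ of the splitting field, with coefficient bit-size also polynomial in those parameters; the expansions $a_{i,k},b_{j,k}$ inherit polynomial size. Integer Hermite/Smith normal form is well known to be computable in polynomial time in the matrix dimensions and bit-size of the entries, so the overall complexity is polynomial in $|V(G)|$ and in $n$, matching the claim. The only genuinely delicate step is the first: ensuring that the algebraic manipulations above are carried out over $\Qds[\alpha]$ in exact arithmetic with controlled bit growth, which is exactly what Landau's theorem guarantees.
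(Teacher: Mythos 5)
Your proposal is correct and follows essentially the same route as the paper: compute the $p_r(\alpha)$ expansions via Landau's algorithm, translate the rational dependences plus the sum-zero constraint into an integer homogeneous system, parametrize its integer kernel with the Smith normal form, and check that the functional $\sum_j m_j$ is even on a basis of that kernel. The only differences are notational (you write the relation coefficient-wise in the power basis of $\alpha$ and extract the kernel basis from $V^{-1}$ rather than $V$, reflecting the opposite Smith-form convention), and the parity check on generators is the same observation the paper makes.
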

\begin{proof}
	From Theorem \ref{thm:computingpols}, we know that we have polynomials $p_r(x)$ so that $p_r(\alpha) = \theta_r$, for all eigenvalues $\theta_r$ of the graph, and $\alpha$ a primitive element to the splitting field of $\phi(x)$. We may of course assume the degrees of the polynomials are smaller than the degree of the minimal polynomial of $\alpha$.
	
	Upon factoring $\phi$ as in Lemma \ref{lem:factorizationphi}, and because $\phi_+$ and $\phi_-$ share no common factor, we can identify which of the polynomials $p_r$ correspond to roots of $\phi_+$, and which correspond to roots of $\phi_-$.
	
	A linear combination of these polynomials evaluated at $\alpha$ will be equal to $0$ if and only if the linear combination of the polynomials is the zero polynomial, because their degrees are smaller than the degree of the minimal polynomial of $\alpha$. Thus, the equations from condition (2) in Corollary \ref{cor:pgst},
	\[\sum_i \ell_i \lambda_i + \sum_j m_j \mu_j = 0 \quad \text{and} \quad \sum_i \ell_i + \sum_j m_j = 0,\]
	give rise to a homogeneous linear system with rational coefficient matrix, which can be scaled to a linear system with integer coefficients, each equation having the gcd of its coefficients equal to 1. Assume $M x = 0$ expresses this system for some matrix $M$ and vector of variables $x$, and let $U$ and $V$ be invertible integral matrices, with integral inverses, of convenient size so that $UMV$ is the Smith normal form of matrix $M$. As a consequence, $(UMV) y = 0$ is trivial to solve, having the first $\ell$  variables in $y$ equal to $0$, with $\ell = \rk M$, and the remaining variables of free choice. Now $V y \in \Zds^m$ if and only if $y \in \Zds^m$, because $V^{-1}$ is also an integral matrix, thus $V y$ is a complete integral parametrization of the solution set of $Mx = 0$, with $x \in \Zds^m$. With it, we can write $\sum_j m_j$ as a sum of free integral variables with integer coefficients, and it is easy to verify that this sum is always even if and only if all coefficients are even.
	
	The complexity of finding polynomials $p_r$ is given by Theorem \ref{thm:computingpols}, and the complexity of computing the Smith normal form is well known to be polynomial on the order of the matrix with bounded coefficients (see for instance \cite{StorjohannThesis}).
\end{proof}

\section{Geometry of a quantum walk} \label{sec:shape}

In the previous section we showed how to use Theorem~\ref{thm:computingpols} to decide whether pretty good state transfer occurs. This is equivalent to asking whether the curve in the complex plane given by an off-diagonal entry of $\exp(\ii t A)$ approaches the unit circle. 

In this section we describe how Theorem \ref{thm:computingpols} provides the necessary tools to understand the geometry of these curves in general.

Recall the notation $U(t) = \exp(\ii t A)$, and that $A = \sum_{r = 0}^d \theta_rE_r$ denotes the spectral decomposition of $A= A(G)$. We begin with an example. Let $G = C_5$, the cycle graph on five vertices. The curves in the complex plane given by $U(t)_{1,1}$ and $U(t)_{1,2}$, with $t \in [0,100\pi]$, are, respectively shown in Figure \ref{figure1}.

\begin{figure}[h]
\includegraphics[scale=0.5]{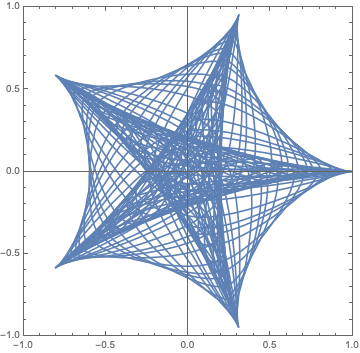}\includegraphics[scale=0.5]{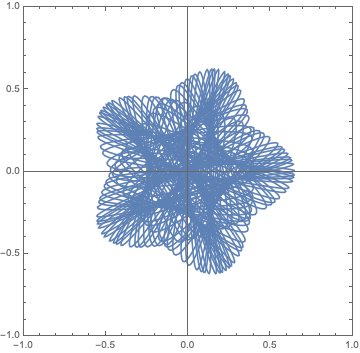} \caption{Entries $U(t)_{1,1}$ and $U(t)_{1,2}$ for $C_5$, $t \in [0,100\pi]$. Note that $U(t)_{1,1}$ starts at the point $(1,0)$ and therefore, as it is an almost periodic function, it approaches the unit circle arbitrarily often. On the other hand, pretty good state transfer does not happen between $1$ and $2$, as the closure of $U(t)_{1,2}$ is contained in circle of diameter smaller than $1$.} \label{figure1} 
\end{figure}

We are mostly interested in understanding some of the geometric features of these plots. 

Recall that
\[
	\bra{a} U(t) \ket{b} = \sum_{r = 0}^d \bra{a} E_r \ket{b} \e^{\ii t \theta_r}.
\]
This is an almost periodic function of $t$ (see \cite{AlmostPeriodicFunctionsBook}). Its behaviour depends crucially on rational dependences between the frequencies $\theta_r$ or, more precisely, between those $\theta_r$ for which $\bra{a} E_r \ket{b}\neq 0$. There are two extreme cases: On one extreme, all $\theta_r$ are integer up to a common factor, and the function is strictly periodic. At the other extreme the $\theta_r$ are rationally independent. Note that even though all eigenvalues of the graph sum to $0$, it could be that those for which $\bra{a} E_r \ket{b}\neq 0$ are indeed rationally independent, for some $a$ and $b$ vertices of the graph.

By Kronecker's Theorem, all these cases are covered by specifying the subgroup of the torus on which the tuple $(\e^{\ii t\theta_0},...,\e^{\ii t\theta_d})$ lies. This gives us two complementary ways of looking the value distribution of this function, which are visualized by Fig.~\ref{figure1} and Fig.~\ref{figure2}, respectively. The first figure just follows the values $\bra{a} U(t) \ket{b}$ as traced out with varying $t$ in the complex plane. By almost periodicity, the proportion of time spent on average in any region of the plane is a probability measure, called the sojourn measure. The second perspective looks at this as the image of a similarly defined distribution in the torus group: In the group the sojourn measure is clearly translation invariant, hence equal to the Haar measure. Therefore if we plot the image of a regular grid in the appropriate subgroup, we get a more direct representation of the sojourn measure (cp.\ Fig.~\ref{figure2}). In fact, when the $\theta_r$ are very nearly dependent, the direct orbit picture (run for a finite time) will be indistinguishable from the rational case, and hence an inaccurate representation of the infinite time limit. 

\begin{figure}[h]
\includegraphics[scale=0.5]{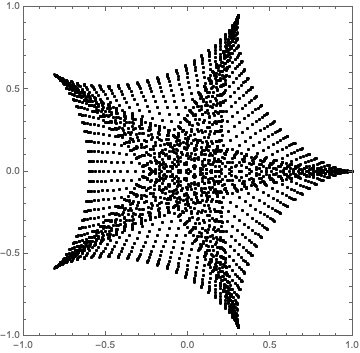}\includegraphics[scale=0.5]{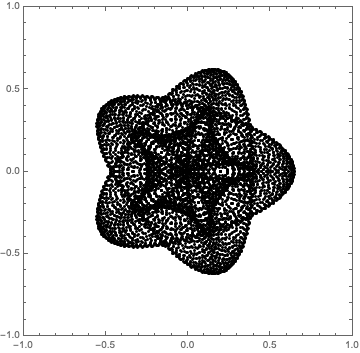} \caption{Images of the uniform grid of the $2$-torus onto the complex plane, under the map 
$(z_1,z_2) \mapsto \sum_{r = 0}^2 \bra{a} E_r \ket{b} \e^{\ii f^r(\ov{z})}$, where $E_r$ are the idempotents of $A(C_5)$, $f^0(z_1,z_2) = -2(z_1+z_2)$, $f^1(z_1,z_2) = z_1$, $f^2(z_1,z_2) = z_2$, and we have respectively $a = b$ and $a$ and $b$ distinct neighbours.} \label{figure2}
\end{figure}

In the following subsection we present an explicit description of how to compute these functions $f^x$ described in Figure~\ref{figure2} and therefore how to obtain this measure.

\subsection{The Haar measure of a quantum walk}

As before, assume $\phi(x)$, the characteristic polynomial of $A$, has been completely factored over its splitting field, and rational polynomials $p_r(x)$ as in Theorem \ref{thm:computingpols} have been computed. We will assume that $\bra{a} E_r \ket{b} \neq 0$ for all $r$, but if this is not the case, the set of indices for which this holds can be exactly determined using Equation \ref{eq:2}, and we can restrict the treatment that follows to those indices. 

Let $P$ be a matrix whose columns correspond to the polynomials $p_r$, and let $m \in \Zds$ so that $mP$ is integral. Assume the rank of $P$ is equal to the integer $k$. Upon computing the Smith normal form $m^{-1}S = UPV$, it follows that the non-zero columns of $m^{-1}U^{-1}S$ form a basis for the column space of $P$ that generates each column of $P$ as an integral linear combination, given by the columns of $V^{-1}$. Thus, the non-zero columns of $m^{-1}U^{-1}S$ correspond to algebraic numbers $w_1,...,w_k$, which are rationally independent, and so that there are integral linear combinations of these giving each $\theta_r$, that we denote by
\begin{equation}
	\theta_r = \sum_{\ell = 1}^k f^r_\ell w_\ell = f^r(\ov{w}).
\label{eq:th_r}
\end{equation}
Theorem \ref{thm:computingpols} therefore guarantees that the integer coefficients $f_\ell^r$ can be computed somewhat efficiently (they appear exactly in the first rows of $V^{-1}$). Note in particular that the linear space of rational combinations of the eigenvalues that are equal to $0$ is exactly generated by the last $(n-k)$ columns of $V$.

It follows from Kronecker's Theorem (Theorem \ref{thm:kro}) that $\{ t \ov{w} : t \in \Rds_+\}$ is dense on the $k$-dimensional torus $\Tds^k = \Rds^{k} / 2\pi\Zds^{k}$, and as $t_j$, for $j \to \infty$, are then chosen so that $t_j\ov{w}$ approximates the point $\ov{z} \in \Tds^k$, it must be that $t_j \theta_r$, for $r = 0 ,...,d$, will approximate the point
\[
	\zeta_r = f^r(\ov{z}) \pmod{2\pi}.
\]
The consequence is that there is a subspace of dimension $k$ in the $(d+1)$-torus where the curve
\[t \mapsto t (\theta_0,...,\theta_d) \pmod{2\pi}\] 
is dense. Moreover, a theorem due to Weyl, which extends Kronecker's Theorem, asserts that the curve $t \ov{w}$ covers the $k$-torus $\Tds^k$ uniformly, thus, if $\chi_S$ is the characteristic function of a Jordan measurable subset $S$ of $\Tds^k$, we have
\[
	\lim_{T \to \infty} \frac{1}{T} \int_0^T \chi_S(t \ov{w}) \ \textrm{d}t = \int_S \chi_S \ \textrm{d}V,
\]
where the right hand side is clearly the volume of $S$. For more details, see \cite[Chapter 1]{beck2017strong}. For this reason, in order to understand the region covered by the curve $\bra{a} U(t) \ket{b}$ in $\Cds^2$ and also how densely this happens in each subregion, we can introduce coordinate variables $z_1,...,z_k$ to the $k$-torus and consider the map given by
\begin{equation}
	F : (z_1,...,z_k) = \ov{z} \mapsto \sum_{r = 0}^d \bra{a} E_r \ket{b} \e^{\ii f^r(\ov{z})}. \label{eq:map}
\end{equation}
To exemplify, let us look again to the cycle $C_5$. Let $\gamma = \ee^{\ii( 2\pi/5)}$. The distinct eigenvalues of $C_5$ are $2$, $\gamma^2 + \gamma^3$ and $-1-\gamma^2-\gamma^3$, thus, with $w_1 = \gamma^2 + \gamma^3$ and $w_2 = -1 - w_1$, which are rationally independent, we have the eigenvalues $-2(w_1+w_2),w_1,w_2$. Upon examining the map described just above, the image of the uniform grid of the $2$-torus is quite similar to what we saw in Figure \ref{figure1}, as can be seen in Figure \ref{figure2}.

One observation about these pictures is that they will always be symmetric about the real axis, even for $t \in \Rds_+$ only. This is an immediate consequence of Theorem \ref{thm:kro}, because if the condition in its statement holds for $\{\zeta_0,...,\zeta_d\}$, then it also holds for $\{-\zeta_0,...,-\zeta_d\}$.

\begin{corollary}
	Let $G$ be a graph, and $A=A(G)$. Then the closure of the curve in the complex plane described by any entry of $\exp(\ii t A)$ with $t \geq 0$ and as $t \to \infty$ is invariant under complex conjugation. \qed
\end{corollary}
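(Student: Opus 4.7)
The plan is to show that complex conjugation of the curve $t\mapsto U(t)_{a,b}$ (with $t\geq 0$) corresponds to time reversal $t\mapsto -t$, and then to realise this time reversal as a limit of positive times via Kronecker's theorem. The first step is routine: since $A$ is real symmetric, each $E_r$ is a real matrix, so $\bra{a}E_r\ket{b}\in\Rds$ and consequently
\[
\overline{U(t)_{a,b}} \;=\; \sum_r \bra{a}E_r\ket{b}\,\e^{-\ii t\theta_r} \;=\; U(-t)_{a,b}.
\]
It therefore suffices to show that for every fixed $t_0\geq 0$, the value $U(-t_0)_{a,b}$ lies in the closure of $\{U(t)_{a,b} : t\geq 0\}$.

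For this second step I would apply Theorem~\ref{thm:kro} directly, with the tuple $(\zeta_0,\ldots,\zeta_d) := (-t_0\theta_0,\ldots,-t_0\theta_d)$. Any integer relation $\sum_r \ell_r \theta_r=0$ trivially implies $\sum_r \ell_r \zeta_r = -t_0\cdot 0 = 0 \equiv 0\pmod{2\pi}$, so Kronecker's hypothesis is satisfied. We therefore obtain a sequence $y_j\geq 0$ (which we may take to tend to infinity) such that $y_j\theta_r\equiv -t_0\theta_r\pmod{2\pi}$ simultaneously for all $r$, with tolerance shrinking to $0$. Continuity of the exponential then gives $U(y_j)_{a,b}\to U(-t_0)_{a,b}=\overline{U(t_0)_{a,b}}$.

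To finish, I would promote this from the curve to its closure: if $z_n=U(t_n)_{a,b}\to z$, then each $\overline{z_n}$ already lies in the closure by the previous paragraph, and $\overline{z_n}\to \overline{z}$, hence $\overline{z}$ belongs to the closure as well. I do not foresee any real obstacle; the entire argument simply packages the observation the authors make immediately before stating the corollary, namely that the Kronecker condition ``$\sum_r \ell_r \theta_r=0 \Rightarrow \sum_r \ell_r \zeta_r\equiv 0\pmod{2\pi}$'' is manifestly invariant under the sign flip $\zeta_r\mapsto -\zeta_r$.
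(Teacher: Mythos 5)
Your proposal is correct and takes essentially the same route as the paper: the authors prove the corollary by the one-sentence remark immediately preceding it, namely that the Kronecker condition of Theorem~\ref{thm:kro} is invariant under $\zeta_r\mapsto-\zeta_r$, combined with the fact that the $\bra{a}E_r\ket{b}$ are real so that negating the phases conjugates the value. Your write-up simply fills in the details of that remark (identifying conjugation with time reversal, invoking Kronecker with $\zeta_r=-t_0\theta_r$, and passing to the closure), so no further comparison is needed.
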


\subsection{The uninteresting cases} \label{sec:uninteresting}

Given the spectral decomposition of the graph, $A = \sum_{r = 0}^d \theta_rE_r$, it could occur that $\langle a | E_r | b \rangle \neq 0$ implies $\theta_r \in \Zds$. In this case, the image of $U(t)_{a,b}$ in the complex plane will coincide with the image of the $1$-torus under an injective map, and therefore it will result in a closed curve, with period $2\pi$. Some of these curves might be interesting on their own, but the questions are certainly going to be more simply addressed. An interesting exercise is to plot the curves obtained from the well-known Petersen graph.

A second uninteresting case is that of a bipartite graph. In this case, the adjacency matrix can be written as
\[
	A =\pmat{0&B\\ B^T&0},
\]
where $B$ is a $01$ matrix of appropriate size. Then
\[
	A^{2k} = \pmat{(BB^T)^k&0\\ 0&(B^TB)^k},\qquad A^{2k+1} = \pmat{0&(BB^T)^kB\\ (B^TB)^kB^T&0}
\]
and
\[
	U(t) = \exp(\ii tA) 
		= \pmat{\cos(t\sqrt{BB^T})& \ii\sin(t\sqrt{BB^T})B\\ 
			\ii\sin(t\sqrt{B^TB})B^T& \cos(t\sqrt{B^TB})}.
\]
As a consequence, for all times $t$ and any two vertices $a$ and $b$, either $U(t)_{a,b}$ is always real or always purely imaginary. The plots of these curves on the complex plane will hence be entirely contained in the coordinate axis, and not much will be seen. If the graph is not bipartite, then the largest eigenvalue has strictly larger absolute value than any other eigenvalue, and therefore any entry of $U(t)$ will attain values which are neither real nor purely imaginary.

Fortunately, most graphs are neither bipartite nor have integer spectrum, so one should expect that the typical case is interesting.

\subsection{Odd prime cycles} \label{sec:oddprime}

Figure \ref{figure1} displays a rotational symmetry. Despite a first guess, this symmetry is not related to a graph automorphism of the cycle, but rather to the fact that adding a certain constant angle to the free independent variables described in Equation \eqref{eq:th_r} results in adding the same constant to all eigenvalues. This phenomenon is common to all odd prime cycles.

\begin{theorem}
	Let $G = C_p$, with $p$ an odd prime, and $A=A(G)$. Then any entry of $\exp(\ii t A)$, as $t \to \infty$, is dense in a region $R$ of $\Cds^2$ that admits a $p$-fold rotational symmetry of the plane.  
\end{theorem}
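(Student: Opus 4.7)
The plan is to exhibit an explicit translation $\ov v$ of the torus $\Tds^d$ from Section~\ref{sec:shape} that acts on the map $F$ of Equation~\eqref{eq:map} by multiplication by $\e^{2\pi\ii/p}$. The distinct eigenvalues of $A(C_p)$ are $\theta_0 = 2$ and $\theta_r = \zeta^r + \zeta^{-r}$ for $r = 1, \ldots, d$, where $\zeta = \e^{2\pi\ii/p}$ and $d = (p-1)/2$. The first step would be to establish that $\{\theta_1, \ldots, \theta_d\}$ is $\Qds$-linearly independent, and that the only nontrivial $\Qds$-linear relation among $\theta_0, \theta_1, \ldots, \theta_d$ (up to scaling) is the trace identity
\[
    \theta_0 = -2(\theta_1 + \cdots + \theta_d),
\]
which follows from $\mathrm{Tr}(A) = 0$ together with the multiplicity pattern $\theta_r = \theta_{p-r}$.

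Granted this, in the framework of Section~\ref{sec:shape} I would take $w_\ell = \theta_\ell$ for $\ell = 1, \ldots, d$ as the rationally independent generators, so that the integer coefficients in Equation~\eqref{eq:th_r} are $f^r_\ell = \delta_{r,\ell}$ for $r \geq 1$ and $f^0_\ell = -2$, giving $f^0(\ov z) = -2(z_1 + \cdots + z_d)$ and $f^r(\ov z) = z_r$ otherwise. Now set $\ov v = (2\pi/p, \ldots, 2\pi/p) \in \Tds^d$. Then $f^r(\ov v) = 2\pi/p$ for each $r \geq 1$, while
\[
    f^0(\ov v) = -2d \cdot \frac{2\pi}{p} = -(p-1) \cdot \frac{2\pi}{p} \equiv \frac{2\pi}{p} \pmod{2\pi}.
\]
Consequently every term $\bra{a} E_r \ket{b} \e^{\ii f^r(\ov z)}$ appearing in $F(\ov z + \ov v)$ is the corresponding term of $F(\ov z)$ multiplied by $\e^{2\pi\ii/p}$, so $F(\ov z + \ov v) = \e^{2\pi\ii/p} F(\ov z)$. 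Since translation by $\ov v$ is a bijection of $\Tds^d$, the region $R$—which by the Kronecker--Weyl theorem equals the closure of $F(\Tds^d)$—is invariant under rotation by $2\pi/p$.

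The main obstacle I anticipate is the linear independence claim in the first step. The argument I would use runs as follows: suppose $\sum_{r=1}^d c_r (\zeta^r + \zeta^{-r}) = 0$ with $c_r \in \Qds$. Using $\zeta^{-r} = \zeta^{p-r}$, this rewrites as $\sum_{s=1}^{p-1} c'_s \zeta^s = 0$, where $c'_s = c_s$ for $1 \leq s \leq d$ and $c'_s = c_{p-s}$ for $d < s \leq p-1$. Because $\{\zeta, \zeta^2, \ldots, \zeta^{p-1}\}$ is a $\Qds$-basis of $\Qds[\zeta]$ (it has cardinality $p - 1 = [\Qds[\zeta]:\Qds]$ and spans, as a consequence of $1 + \zeta + \cdots + \zeta^{p-1} = 0$), every $c'_s$ vanishes and therefore every $c_r = 0$. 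Once this independence is in hand, the remainder is a direct verification within the machinery of Section~\ref{sec:shape}; no further structural ingredient is required, and it is precisely the primality of $p$ that forces the rational relation structure to align so that the diagonal shift $\ov v$ simultaneously rotates all frequencies by the same angle.
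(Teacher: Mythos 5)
Your proposal is correct and follows essentially the same route as the paper: identify $\{\theta_1,\dots,\theta_{(p-1)/2}\}$ as rationally independent generators with $\theta_0=-2\sum_r\theta_r$, and check that the diagonal translation by $2\pi/p$ on the torus shifts every frequency $f^r(\ov z)$ by $2\pi/p \pmod{2\pi}$, hence rotates $F$ by $\e^{2\pi\ii/p}$. The only difference is that you spell out the $\Qds$-linear independence of $\zeta^r+\zeta^{-r}$ via the basis $\{\zeta,\dots,\zeta^{p-1}\}$ of $\Qds[\zeta]$, a step the paper declares immediate; your justification of it is correct.
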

\begin{proof}
	Let $\omega = \exp(2 \pi \ii/p)$. Distinct eigenvalues of $A(C_p)$ are $\theta_r = \omega^r + \omega^{p-r}$, for $r \in \{0,1,\cdots,(p-1)/2\}$. All have multiplicity equal to $2$, except for $\theta_0$ that is simple. First note that $\Qds[\omega]$ contains all eigenvalues, and that the minimal polynomial of $\omega$ is the $p$th cyclotomic polynomial $\phi_p(x) = \sum_{i = 0}^{p-1} x^i$. From the eigenvalue expressions, it is immediate that $\ov{\theta} = \{\theta_1,\cdots,\theta_{(p-1)/2}\}$ form a basis for the extension, with $\theta_0$ equal to $-2$ times their sum.
	
	As shown in Section \ref{sec:shape}, the region $R$ is determined by the map from the torus to the complex plane shown in Equation \eqref{eq:map}. If we add $2\pi/p$ to each torus coordinate, then each $f^r(\ov{z})$ gets added by the same $2\pi/p$. This is readily seen for $r > 0$ as $f^r$ is a coordinate projection, and it is true for $f^0$, because

\begin{align*}
    f^0(\ov{z} + (2\pi/p) \1) &\equiv_{2\pi} -2 \sum_{i = 1}^{(p-1)/2} \left(z_i + \frac{2\pi}{p}\right)\\
     & \equiv_{2\pi} -\frac{2 (p-1)\pi}{p} -2 \sum_{i = 1}^{(p-1)/2} z_i\\
     & \equiv_{2\pi} \frac{2\pi}{p} -2 \sum_{i = 1}^{(p-1)/2} z_i \\
     & \equiv_{2\pi} \frac{2\pi}{p} + f^0(\ov{z}).
\end{align*}
\end{proof}

Note that the result above does not hold for other odd cycles, as more rational dependences occur amongst their eigenvalues. For example, the entries of $U(t)$ for the cycle $C_9$ only exhibit $3$-fold rotational symmetry. For a graph in general, we note that an $n$-rotational symmetry arises whenever, for all $r$, we have
\begin{align}
	\sum_{\ell = 1}^k f_\ell^r \equiv 1 \pmod n, \label{eq:rotation}
\end{align}
with $f_\ell^r$ the coefficients described in Equation \eqref{eq:th_r}.

Another distinguished geometric feature of the pictures in Figure \ref{figure2} is related to the singularities of the map from the torus to the plane. The distinguished borders appearing in Figure \ref{figure2} and resembling star like contours are the caustics of this map. They are the images under $F$ (described in Equation \eqref{eq:map}) of the curves in the torus which are the points where the Jacobian matrix of $F$ does not have full rank.

For the cycle $C_5$, we first analyze the diagonal entries. Here, Equation \eqref{eq:map} reduces to
\[
	F(z_1,z_2) = \frac{1}{5} \e^{-2 \ii (z_1 + z_2)} + \frac{2}{5} \e^{\ii z_1} +   \frac{2}{5} \e^{\ii z_2},
\]
thus $\partial F / \partial z_1$ is (real) parallel to $\partial F / \partial z_2$ precisely when $z_1 = z_2$ or $2z_1 = -3z_2$ (and equivalently $-3z_1 = 2z_2$). These solutions describe the hypocycloids in Figure \ref{figure3}. 

\begin{figure}[h]
\begin{center}
\includegraphics[scale=0.5]{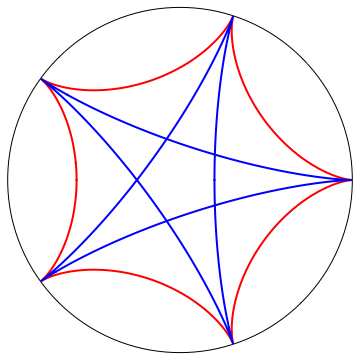} \caption{Cycloids $\e^{-\ii 4t} + 4 \e^{\ii t}$ (red, outer curve) and $3 \e^{2 \ii t} + 2 \e^{-3\ii t}$ (blue, inner curve), up to scalling.} \label{figure3}
\end{center}
\end{figure}

In general, for a diagonal entry of $\exp(\ii t A(C_p))$, $p$ an odd prime,  the following hypocycloids are obtained:
\[
	\frac{1}{p} \big( k \e^{-\ii (p-k) t} + (p-k) \e^{\ii k t} \big),
\]
for $k$ between $1$ and $(p-1)/2$.

For the off-diagonal entries, the solutions cannot be so easily expressed. For instance, for the cycle $C_5$ and an entry corresponding to neighbours, Equation \eqref{eq:map} reduces to
\[
	F(z_1,z_2) = \frac{1}{5} \left( \e^{-2 \ii (z_1 + z_2)} + \frac{-1+\sqrt{5}}{2} \e^{\ii z_1} + \frac{-1-\sqrt{5}}{2} \e^{\ii z_2}\right),
\]
and an expression for the exact solution to $\partial F/\partial z_1= \gamma \cdot \partial F/\partial z_2$ with $\gamma \in \Rds$ is not available, although it is always possible to obtain a numerical approximation of the curve. The problem however becomes significantly less tractable for larger cycles or other graphs.

\subsection{More graphs}\label{sec:another}

Let $G$ be the graph $K_4$ minus one of its edges\footnote{The graphs $K_n$, where $n$ vertices are all connected to each other, have integer eigenvalues, and therefore the plots of the curves $\exp(\ii t A(K_n))_{a,b}$ will be uninteresting.}. Figure \ref{fig:4gallery} displays four distinct entries of $\exp(\ii t A(G))$ for $t$ in $[0,100]$, with the insets at top left displaying which.

\begin{figure}[!h]
  \centering
   \includegraphics[width=5cm]{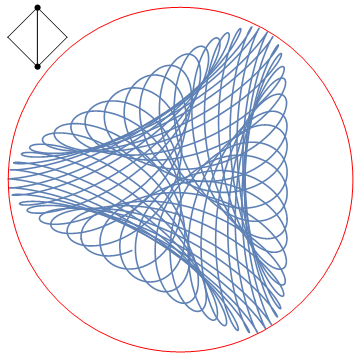}\\
    \includegraphics[width=5cm]{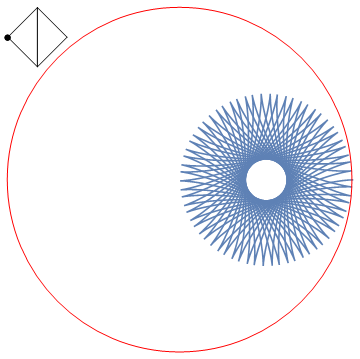}\ \includegraphics[width=5cm]{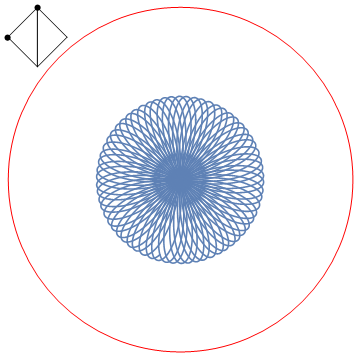}\\
  \caption{Sketches of various matrix element functions of the quantum walk on the graph $K_4$ minus one edge.}\label{fig:4gallery}
\end{figure}

The graph has characteristic polynomial $\phi(t)=t(t+1)(t^2 - t -4)$. The eigenvector for the eigenvalue $0$ is nonzero only at the two vertices of degree $2$, so the term corresponding to it only plays a role in the bottom left picture, leading to the translation of the center of mass. Having $0$ as an eigenvalue and the entry in its eigenprojector nonzero is the only circumstance these picture will not be centered at the origin.
   
The eigenvalues simultaneously in the support of the vertices of degree $2$ and $3$ are precisely the roots of $(t^2 - t -4)$. Thus they are rationally independent, and this is quite special. It implies Equation \eqref{eq:rotation} holds for all $n$, thus the (closure of the) picture in the bottom right is fully rotationally symmetric.

Finally, the picture in the top is an example of pretty good state transfer, that we discussed in Theorem \ref{thm:pgstdecision}. The three-fold symmetry arises from the fact that the eigenvalues in the support of the two vertices are the roots of $(1 + t)(t^2 - t -4)$, and Equation \eqref{eq:rotation} and the fact that two of them sum to minus the third implies the rotational symmetry.

This shows that a crucial role is played by the rational dependences of the eigenvalues of the graph. In other terms, let us call $\Gamma$ the closure of the subgroup generated by all the tuples $(t \theta_k)$, $k=0,...,d$ in the $(d+1)$-torus. Clearly, the map sending this to $\exp(\ii t A)_{a,b}$ can be thought of as a continuous function on $\Gamma$, and since the average of a function of time is invariant under time translations, the averaging process corresponds to a translation invariant average on $\Gamma$, i.e., to the Haar measure. Now by duality of locally compact groups the group $\Gamma$ is uniquely characterized by the set of characters vanishing on it, i.e., by the integer tuples $(n_k)$ such that $\sum_k n_k \theta_k=0$. These are just the rational dependences of the eigenvalues.

At this point a natural question is whether these rational dependences are always trivially obtained by factoring the characteristic polynomial over the integers and considering only the sums of eigenvalues corresponding to each factor. This is false for bipartite graphs, as any eigenvalue comes coupled with its negative even if they belong to factors of large degree. We display a non-bipartite example in Figure~\ref{fig:nonbip}.

\begin{figure}[!h]
\begin{center}
\includegraphics[scale=1]{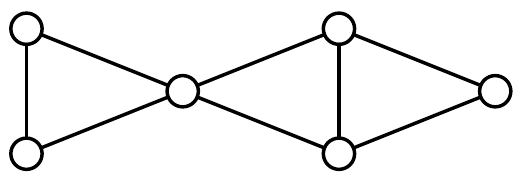} \caption{The characteristic polynomial admits a factorization over the integers as $(t - 1)^2(t^4 - 2 t^3 - 5 t^2 + 6 t + 4)$, and $\frac{1 \pm \sqrt{13 \pm 4\sqrt{5}}}{2}$ are the roots of the degree 4 term. They are paired into eigenvalues that sum to $1$, so there are dependences which do not require summing all roots of every factor.} \label{fig:nonbip}
\end{center}
\end{figure}

\section{The supremum of probabilities and even moments} \label{sec:max}

Given a graph $G$ and two vertices $a$ and $b$, what is the supremum of the probabilities of transfer of state from $a$ to $b$ during a quantum walk? Equivalently, what is the radius of the smallest unit disk in the complex plane that contains $\{\exp(\ii t A)_{a,b} : t \in \Rds_+\}$ ?

It is a well known fact that if $M$ is the maximum value attained by $|\exp(\ii t A)_{a,b}|$ with $t \in [r_1,r_2]$, then
\[
	\frac{1}{r_2-r_1} \cdot \lim_{m \to \infty} \left(\int_{r_1}^{r_2} |\exp(\ii t A)_{a,b}|^m \textrm{d}t \right)^{1/m} = M,
\]
Thus making $[r_1,r_2] \to [0,\infty)$ and choosing large enough values of $m$, one obtains progressively good approximations for $M$. Unfortunately however the computation of $|\exp(\ii t A)_{a,b}|^k$ is not efficient, and already for small graphs this procedure will not lead to satisfactory results.

The theory we presented in Sections \ref{sec:kronecker} and \ref{sec:shape} allow for an alternative and more effective approach (as long as we replace maximum for supremum).

\begin{proposition}
	Assume coefficients $f^r$ are as given in Equation \eqref{eq:th_r}, and $F$ as in Equation \eqref{eq:map}. The supremum of $|\exp(\ii t A)_{a,b}|$ for $t \in \Rds^+$ is
	\[
		\frac{1}{(2\pi)^k}\cdot \lim_{m \to \infty} \left(\int_{\mathds{T}^k} |F(\ov{z})|^m \textrm{d}\ov{z} \right)^{1/m}
	\] \qed
\end{proposition}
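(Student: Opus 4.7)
The plan is to convert the statement about the one-parameter curve $t\mapsto \exp(\ii t A)_{a,b}$ into a statement about the continuous function $F:\Tds^k\to\Cds$ on the compact torus, using the identity $\exp(\ii t A)_{a,b}=F(t\ov{w})$ that follows immediately from Equations \eqref{eq:th_r} and \eqref{eq:map}. The two ingredients I would combine are Kronecker/Weyl equidistribution and the classical convergence of $L^m$ to $L^\infty$ norms on a probability space.

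First I would establish that $\sup_{t\geq 0}|\exp(\ii t A)_{a,b}| = \|F\|_\infty$. Since $w_1,\ldots,w_k$ were constructed in Section~\ref{sec:shape} to be rationally independent, Kronecker's theorem (Theorem~\ref{thm:kro}) implies that $\{t\ov{w}\pmod{2\pi\Zds^k}:t\geq 0\}$ is dense in $\Tds^k$; continuity of $F$ on the compact torus then upgrades this density to equality of the two suprema, with the right-hand one actually a maximum. Next, I would invoke Weyl's equidistribution theorem, already used in Section~\ref{sec:shape}: applied to the continuous integrand $|F|^m$ for each fixed integer $m\geq 1$, it yields
\[
\lim_{T\to\infty} \frac{1}{T}\int_0^T |\exp(\ii t A)_{a,b}|^m\,\textrm{d}t \;=\; \frac{1}{(2\pi)^k}\int_{\Tds^k}|F(\ov{z})|^m\,\textrm{d}\ov{z}.
\]

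Finally I would apply the standard fact that on the probability space $(\Tds^k,(2\pi)^{-k}\textrm{d}\ov{z})$ the $L^m$ norm of the continuous bounded function $F$ converges to its $L^\infty$ norm as $m\to\infty$. Combined with the first step this gives the proposition. There is no substantive obstacle here; the only detail to keep straight is the placement of the factor $1/(2\pi)^k$ relative to the $m$-th root, but since $((2\pi)^k)^{1/m}\to 1$ the placement of the normalization does not affect the limit value $\|F\|_\infty$, so either convention is fine. The time-averaged $L^m$ identity quoted just before the proposition falls out of the second step as $m\to\infty$, confirming that the classical time-average approach and the torus-integral approach compute the same quantity; the advantage of the latter, as the paper stresses, is that it replaces an intractable integral in $t$ by a genuine integral over a compact domain, where numerical or symbolic methods based on $F$ can be brought to bear.
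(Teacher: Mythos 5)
Your argument is correct and is exactly the proof the paper intends but omits (the proposition is stated with an immediate \qed): the identity $\exp(\ii t A)_{a,b}=F(t\ov{w})$, density of $\{t\ov{w}\}$ in $\Tds^k$ from rational independence of the $w_\ell$ via Kronecker, and the standard convergence of $L^m$ norms to the $L^\infty$ norm on the probability space $(\Tds^k,(2\pi)^{-k}\,\textrm{d}\ov{z})$. One small correction to your closing remark: the two conventions are \emph{not} both fine --- with the factor $(2\pi)^{-k}$ placed outside the $m$-th root and outside the limit, as the proposition literally writes it, the expression evaluates to $(2\pi)^{-k}\|F\|_\infty$ rather than $\|F\|_\infty$, so the normalization must sit inside the $m$-th root (this is a typo in the statement, mirrored in the time-average formula earlier in the section).
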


The benefit of this approach is that for $m$ even, $|F(\ov{z})|^m$ is a sum of exponentials, and for each term the exponent is an integer combination of independent torus variables. Hence a term survives the integral only when the coefficient of each variable in its exponent is equal to $0$. Thus, from
\begin{align*}
	|F(\ov{z})|^2 & = \left(\sum_{r = 0}^d \bra{a} E_r \ket{b} \e^{\ii f^r(\ov{z})}\right)\left(\sum_{r = 0}^d \bra{a} E_r \ket{b} \e^{-\ii f^r(\ov{z})} \right)\\& = \sum_{r,s = 0}^d \bra{a} E_r \ket{b}\bra{a} E_s \ket{b} \e^{\ii (f^r-f^s)(\ov{z})},
\end{align*}
it follows that the terms in $|F(\ov{z})|^{2\ell}$ whose exponents are equal to $0$ correspond to the solutions of $p_{00} + p_{01} + \cdots + p_{dd} = \ell$, with $p_{rs}$ nonnegative integers, and so that, for all $i$ from $1$ to $k$, we have
\[
	\sum_{r,s} p_{rs}(f^r_i - f^s_i) = 0.
\] 
This approach provides an exact method to approximate the supremum of probability of transfer in any quantum walk. Moreover, it allows us to draw a connection to the average mixing matrix discussed in Theorem \ref{thm:avgmatrix}. With $F$ coming from the definition in Equation \eqref{eq:map}, note that 
\[
	\frac{1}{(2\pi)^2} \int_{\mathds{T}^k} |F(\ov{z})|^2 \textrm{d}\ov{z}
\]
is precisely equal to the $ab$-entry of $\widehat{M}$, the matrix whose entries are the absolute second moments of the entries of $\exp(\ii t A)$. We show below that the even moments are all rational.

\begin{theorem}
	Assume coefficients $f^r$ are as given in Equation \eqref{eq:th_r}, and $F$ as in Equation \eqref{eq:map}, and that $\ell \geq 1$ an integer. Then
	\[
	\frac{1}{(2\pi)^{2\ell}} \int_{\mathds{T}^k} |F(\ov{z})|^{2\ell} \textrm{d}\ov{z}
	\]
	is rational.
\end{theorem}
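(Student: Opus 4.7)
The plan is to expand $|F(\ov z)|^{2\ell}$ as a finite trigonometric polynomial on $\mathds{T}^k$, isolate the constant Fourier coefficient that survives integration, and prove that this coefficient is rational using Galois invariance of the spectral data of $A$.

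First, the multinomial theorem yields
\begin{equation*}
F(\ov z)^\ell\,\overline{F(\ov z)}^\ell=\sum_{|q|=|q'|=\ell}\binom{\ell}{q}\binom{\ell}{q'}\Bigl(\prod_{r=0}^d\bra{a}E_r\ket{b}^{q_r+q'_r}\Bigr)\,e^{\ii\sum_r(q_r-q'_r)f^r(\ov z)}.
\end{equation*}
Because each $f^r(\ov z)=\sum_{i=1}^k f^r_i z_i$ is an integer linear form, the integral over $\mathds{T}^k$ of each exponential is $(2\pi)^k$ when $\sum_r(q_r-q'_r)f^r_i=0$ for every $i$, and $0$ otherwise. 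Since $w_1,\dots,w_k$ are rationally independent and $\theta_r=\sum_i f^r_i w_i$, this coefficient condition is equivalent to the single rational relation $\sum_r(q_r-q'_r)\theta_r=0$. Hence, after normalization by the Haar volume $(2\pi)^k$, the integral equals
\begin{equation*}
S=\sum_{\substack{|q|=|q'|=\ell\\\sum_r(q_r-q'_r)\theta_r=0}}\binom{\ell}{q}\binom{\ell}{q'}\prod_{r=0}^d\bra{a}E_r\ket{b}^{q_r+q'_r},
\end{equation*}
so the theorem reduces to showing $S\in\mathbb{Q}$.

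For this I would use Galois invariance. Let $K$ be the splitting field of $\phi(x)$ over $\mathbb{Q}$; since all $\theta_r$ are real, $K\subset\mathbb{R}$ and $K/\mathbb{Q}$ is Galois. Every $\sigma\in\mathrm{Gal}(K/\mathbb{Q})$ permutes the eigenvalues via a permutation $\pi_\sigma$ on $\{0,\dots,d\}$ with $\sigma(\theta_r)=\theta_{\pi_\sigma(r)}$. Each $E_r$ equals a Lagrange interpolation polynomial in $A$ with coefficients in $K$, and $A$ has integer entries fixed by $\sigma$, so $\sigma(E_r)=E_{\pi_\sigma(r)}$ entrywise; in particular $\sigma(\bra{a}E_r\ket{b})=\bra{a}E_{\pi_\sigma(r)}\ket{b}$. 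Applying $\sigma$ to the relation $\sum_r(q_r-q'_r)\theta_r=0$ yields $\sum_r(q_{\pi_\sigma^{-1}(r)}-q'_{\pi_\sigma^{-1}(r)})\theta_r=0$, so the index set of $S$ is carried bijectively to itself by $(q,q')\mapsto(q\circ\pi_\sigma^{-1},q'\circ\pi_\sigma^{-1})$. Multinomial coefficients are symmetric in their entries, so $\sigma$ permutes the summands of $S$ among themselves, giving $\sigma(S)=S$. As this holds for every $\sigma$, we obtain $S\in\mathbb{Q}$.

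The main obstacle is purely organizational: confirming that a single permutation $\pi_\sigma$ simultaneously governs the Galois action on the matrix entries $\bra{a}E_r\ket{b}$ and on the admissibility constraint defining the surviving index set. Once that coherence is established, the rest is a clean symmetry argument, and the proof is a natural higher-moment generalization of the Galois-theoretic reasoning that underlies Theorem \ref{thm:avgmatrix} (which recovers the case $\ell=1$).
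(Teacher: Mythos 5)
Your proof is correct and follows essentially the same route as the paper's: expand $|F(\ov z)|^{2\ell}$ into exponentials, observe that the terms surviving integration are exactly those whose index satisfies the integer relation $\sum_r(q_r-q'_r)\theta_r=0$ (using rational independence of $w_1,\dots,w_k$ to pass between the constraint on the $f^r_i$ and the constraint on the $\theta_r$), and conclude by Galois invariance, since any automorphism permutes the eigenvalues and the entries $\bra{a}E_r\ket{b}$ by one and the same permutation. Your multinomial indexing by pairs $(q,q')$ and the explicit Lagrange-interpolation justification of the coherent action on the $E_r$ are simply a more detailed rendering of the paper's $p_{rs}$ bookkeeping; your explicit use of the Haar normalization $(2\pi)^k$ is in fact the normalization under which the stated rationality holds.
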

\begin{proof}
	In the expansion of $|F(\ov{z})|^{2\ell}$, we consider the terms not multiplying the exponentials. We would like to show that their sum is invariant under all automorphisms of the splitting field of the characteristic polynomial of the graph. As discussed above, each of these terms corresponds to a solution of $p_{00} + p_{01} + \cdots + p_{dd} = \ell$, with $p_{rs}$ nonnegative integers, and so that, for all $i$ from $1$ to $k$, we have $\sum_{r,s} p_{rs}(f^r_i - f^s_i) = 0$. Assume the $w_i$ are as in Equation \ref{eq:th_r}. Then $\sum_{r,s} p_{rs}(f^r_i w_i - f^s_i w_i) = 0$, thus
	\begin{align*}
		0 & = \sum_i \sum_{r,s} p_{rs}(f^r_iw_i - f^s_iw_i) \\ & 
		= \sum_{r,s} p_{rs}\left(\sum_i f^r_i w_i - \sum_i f^s_i w_i\right) \\
		& = \sum_{r,s} p_{rs}(\theta_r - \theta_s).
	\end{align*}
	Let $\Psi$ be a field automorphism of the splitting field of the characteristic polynomial of the graph, and recall that $\Psi$ induces a permutation on the (indices of the) eigenvalues, whose inverse we shall denote by $\psi$. Then 
	\[
	0 = \Psi(0) = \Psi \left(\sum_{r,s} p_{rs}(\theta_r - \theta_s) \right) = \sum_{r,s} p_{\psi(r)\psi(s)}(\theta_r - \theta_s).
	\]
	As a consequence, the set of terms not multiplying exponentials is preserved by $\Psi$, and therefore so is their sum.
\end{proof}

\section{Conclusion} \label{sec:conclusion}

We are motivated by the task of understanding as much as possible about the quantum walk in a given arbitrary graph. 

A simple question such as what is the maximum probability of transfer between two vertices is still not completely addressed. We provided a method to decide whether the supremum is equal to $1$ or not. We were able to do this by exhibiting an algorithm that tests a special case of the well-known Kronecker's theorem on Diophantine approximations. In order to achieve this result, we used known techniques that were yet strange in the analysis of quantum walks in graphs.

Our investigation led to us to some deeper and more interesting questions about the geometry of the curves drawn by the entries of $\exp(\ii t A)$. We provided a thorough analysis of some geometric features, characterizing rotational symmetry and singularities. 

Following, we showed yet another improvement to the problem of finding the best possible probability of transfer, exhibiting a method that approximates the supremum up to desired precision even when not equal to $1$. This was achieved by applying some observations about how to compute even absolute moments, along with a result showing that these moments are rational. 

We speculate that further analysis of the moments of the entries of $\exp(\ii t A)$ could lead to interesting theory --- for instance, the torus is compact and the function $F$ defined in Equation \eqref{eq:map} is continuous, thus in principle the probability distribution of its image on the complex plane can be completely recovered from the $(a,b)$-moments obtained upon averaging $F(\ov{z})^a\ov{F(\ov{z})^b}$ over the torus.

Finally, a still unanswered question regarding pretty good state transfer is how to compute times that achieve high probability of transfer. The  interpretation of this phenomenon as a map from torus to the complex plane might lead to new advances regarding this question.

\subsection*{Acknowledgements}

Authors acknowledge the hospitality of the Banff International Research Station during the 2019 Quantum Walks and Information Tasks workshop, when seeds to this work were planted. P.F. Baptista acknowledges a CAPES master's scholarship. C. Godsil gratefully acknowledges the support of the Natural Sciences and Engineering Council of Canada (NSERC), Grant No.RGPIN-9439.

\bibliographystyle{plain}
\bibliography{qwalk.bib}

\end{document}